\newtheorem{theorem}{Theorem}
\newtheorem{lemma}[theorem]{Lemma}
\newtheorem{corollary}[theorem]{Corollary}
\newcommand{\qed}{\hfill \IEEEQED}
\newcommand{\markov}{ - \!\!\circ\!\! - }
\DeclareMathAlphabet{\bm}{OML}{cmm}{b}{it}
\newcommand{\rom}[1]{\mathrm{#1}}
\newcommand{\san}[1]{\mathsf{#1}}
\begin{document}
%
\title{A Converse Bound on Wyner-Ahlswede-K\"orner Network via Gray-Wyner Network}

\author{\IEEEauthorblockN{Shun Watanabe\IEEEauthorrefmark{1}}
\IEEEauthorblockA{\IEEEauthorrefmark{1}Department of Computer and Information Sciences, 
Tokyo University of Agriculture and Technology, Japan, \\
E-mail:shunwata@cc.tuat.ac.jp}}


\maketitle

\begin{abstract}
We show a reduction method to construct a code for the Gray-Wyner (GW) network 
from a given code for the Wyner-Ahlswede-K\"orner (WAK) network. By combining this reduction with a 
converse bound on the GW network, we derive a converse bound on the WAK network. The derived bound 
gives an alternative proof of the strong converse theorem for the WAK network. 
\end{abstract}


%
\IEEEpeerreviewmaketitle

\section{Introduction}

We revisit the coding problem over the Wyner-Ahlswede-K\"orner network, which is 
also known as the lossless source coding with one-helper. The achievable rate region of 
this network was characterized in \cite{wyner:75c, ahlswede:75}. This network is regarded as 
one of typical problems of the network information theory
in the sense that it contains some of basic difficulties 
that arise in multiuser problems; in particular, the characterization of
the achievable rate region involves an auxiliary random variable and Markov chain structure,
which makes it difficult to derive converse bounds of this network. The strong converse theorem for
this network was proved by Ahlswede-G\'acs-K\"orner in \cite{ahlswede:76} with a technique called
the blowing-up lemma. The exponential strong converse was recently
shown by Oohama in \cite{Oohama:15} with some new techniques in the information-spectrum method.

The coding problem over the Gray-Wyner (GW) network
 is another basic problem of
the network information theory introduced in \cite{GraWyn:74}. The characterization of
the achievable rate region of this network also involves an auxiliary random variable; however, 
it does not involve Markov chain structure. The strong converse theorem for this network was
shown by Gu-Effros in \cite{GuEff:09}. By a type based refinement of their approach, the second-order
rate region of the GW network was shown in \cite{Watanabe:15}.

A motivation of this work is to develop an alternative converse approach to the WAK network.
Since the GW network is centralized coding while the WAK network is distributed coding,
it is not clear how to apply the approach in \cite{GuEff:09, Watanabe:15}
to the WAK network directly. 
However, we derive a converse bound on the WAK network by
showing a reduction from the GW network to the WAK network and then
by applying the converse approach  \cite{GuEff:09, Watanabe:15} of the GW network. 
In order to explain an overview of our approach, let us formally introduce
each network below.

\subsection{Gray-Wyner Network}

The coding system of the GW network consists of three encoders
\begin{align*}
\varphi_i^{(n)}:{\cal X}^n \times {\cal Y}^n \to {\cal M}_i^{(n)},~~~i=0,1,2
\end{align*}
and two decoders
\begin{align*}
& \psi_1^{(n)}:{\cal M}_0^{(n)} \times {\cal M}_1^{(n)} \to {\cal X}^n, \\
& \psi_2^{(n)}:{\cal M}_0^{(n)} \times {\cal M}_2^{(n)} \to {\cal Y}^n.
\end{align*}
We omit the blocklength $n$ when it is obvious from the context. For $(X^n,Y^n) \sim P$, the error probability $\rom{P}_{\mathtt{GW}}(\Phi_n|P)$
of code $\Phi_n = (\varphi_0,\varphi_1,\varphi_2,\psi_1,\psi_2)$ is defined as the probability such that 
$(\psi_i(\varphi_0(X^n,Y^n), \varphi_i(X^n,Y^n)) : i = 1,2) \neq (X^n,Y^n)$.
A rate triplet $(r_0,r_1,r_2)$ is defined to be achievable if there exists a sequence of code $\{ \Phi_n \}_{n=1}^\infty$ such that 
\begin{align*}
\limsup_{n\to \infty} \frac{1}{n} \log |{\cal M}_i^{(n)}| \le r_i,~~~ i=0,1,2 
\end{align*}
and
\begin{align*}
\lim_{n\to\infty} \rom{P}_{\mathtt{GW}}(\Phi_n | P_{XY}^n) = 0,
\end{align*}
where $P_{XY}^n$ is the product (i.i.d.) of $P_{XY}$. Then, the achievable rate region ${\cal R}_{\mathtt{GW}}(P_{XY})$ is defined as 
the set of all achievable rate triplets. 

Let ${\cal R}_{\mathtt{GW}}^*(P_{XY})$ be the set of all rate triplets $(r_0,r_1,r_2)$ such that there exists a test channel $P_{W|XY}$ with 
$|{\cal W}| \le |{\cal X}||{\cal Y}|+2$ satisfying 
\begin{align*}
r_0 \ge I(W \wedge X,Y), ~~~
r_1 \ge H(X|W), ~~~
r_2 \ge H(Y|W).
\end{align*} 
It is known that the achievable region of the GW network is characterized as 
${\cal R}_{\mathtt{GW}}(P_{XY}) = {\cal R}^*_{\mathtt{GW}}(P_{XY})$.

\subsection{Wyner-Ahlswede-K\"orner Network}

The coding system of the WAK network consists of two encoders\footnote{For later convenience of
relating the WAK network with the GW network, we use unconventional notations; the helper's encoder is $\tilde{\varphi}_0$
and the main encoder is $\tilde{\varphi}_2$.}
\begin{align*}
& \tilde{\varphi}_0^{(n)}: {\cal X}^n \to \tilde{{\cal M}}_0^{(n)}, \\
& \tilde{\varphi}_2^{(n)}: {\cal Y}^n \to \tilde{{\cal M}}_2^{(n)}
\end{align*}
and one decoder
\begin{align*}
\tilde{{\cal M}}_0^{(n)} \times \tilde{{\cal M}}_2^{(n)} \to {\cal Y}^n.
\end{align*}
For $(X^n,Y^n) \sim P$, the error probability $\rom{P}_{\mathtt{WAK}}(\tilde{\Phi}_n|P)$ of code $\tilde{\Phi}_n = (\tilde{\varphi}_0,\tilde{\varphi}_2,\tilde{\psi})$
is defined as the probability such that $\tilde{\psi}(\tilde{\varphi}_0(X^n),\tilde{\varphi}_2(Y^n)) \neq Y^n$.
A rate pair $(r_0,r_2)$ is defined to be achievable if there exists a sequence of code $\{ \tilde{\Phi}_n \}_{n=1}^\infty$ such that 
\begin{align*}
\limsup_{n\to\infty} \frac{1}{n} \log |\tilde{{\cal M}}_i^{(n)}| \le r_i,~~~i=0,2
\end{align*}
and
\begin{align*}
\lim_{n\to\infty} \rom{P}_{\mathtt{WAK}}(\tilde{\Phi}_n|P_{XY}^n) = 0.
\end{align*}
Then, the achievable region ${\cal R}_{\mathtt{WAK}}(P_{XY})$ is defined as the set of all achievable rate pairs. 

Let ${\cal R}_{\mathtt{WAK}}^*(P_{XY})$ be the set of all rate pair $(r_0,r_2)$ such that there exists a test channel $P_{W|X}$ with $|{\cal W}| \le |{\cal X}| |{\cal Y}|+2$
satisfying\footnote{In fact, we can show the cardinality bound $|{\cal W}| \le |{\cal X}| + 1$. However, for later convenience of
relating the WAK network with the GW network, we apply a slightly loose bound;
there is no harm in enlarging the cardinality of the auxiliary random variable.} 
\begin{align*}
r_0 \ge I(W \wedge X),~~~~ r_2 \ge H(Y|W).
\end{align*}
It is known that the achievable region of the WAK network is characterized as 
${\cal R}_{\mathtt{WAK}}(P_{XY}) = {\cal R}^*_{\mathtt{WAK}}(P_{XY})$.

\subsection{Overview of Approach}

Although the GW network and the WAK network appear to be completely different problems (the former is centralized encoding
while the latter is distributed encoding),
it is known that the achievable rate regions of these networks have the following intimate connection \cite{nair:16}:
\begin{align}
\lefteqn{ \{ (r_0,r_1,r_2) \in {\cal R}_{\mathtt{GW}}^*(P_{XY}) : r_0 + r_1 = H(X) \} } \nonumber \\
&= \{ (r_0,r_1,r_2) : (r_0,r_2) \in {\cal R}_{\mathtt{WAK}}^*(P_{XY}), r_0 + r_1 = H(X) \}. \label{eq:connection-WAK-GW}
\end{align}
In fact, by noting the identity $H(X) + I(W\wedge Y|X) = I(W\wedge X,Y)+ H(X|W)$, 
we can verify that the condition $I(W\wedge X,Y) + H(X|W) = H(X)$ enforces the Markov chain condition 
$W \markov X \markov Y$.

Inspired by the connection in \eqref{eq:connection-WAK-GW},
we shall show a converse bound on the WAK network by the following reduction argument. 
For a given WAK code $\tilde{\Phi}_n = (\tilde{\varphi}_0,\tilde{\varphi}_2,\tilde{\psi})$ with rates $(\tilde{r}_0,\tilde{r}_2)$, we construct a GW code
$\Phi_n = (\varphi_0,\varphi_1,\varphi_2,\psi_1,\psi_2)$ with rates $(r_0,r_1,r_2)$ such that $r_0 \simeq \tilde{r}_0$,
$r_2 \simeq \tilde{r}_2$, and $r_1 \simeq H(X) - r_0$; we also show that  the error probability of the constructed GW code $\Phi_n$ is as small as
that of the original WAK code $\tilde{\Phi}_n$. Then, we apply a converse bound on the GW network, which gives a converse bound on
the WAK network via the above reduction argument. Our approach gives an alternative proof of the
strong converse for the WAK network without using the blowing-up lemma nor Oohama's method.

The rest of the paper is organized as follows. 
In the next section, we state our main results. All the proofs are given in
Section \ref{sec:proofs}. We close the paper with some discussions in Section \ref{section:discussion}. 



\section{Main Result} \label{sec:main}

For a joint type $P_{\bar{X}\bar{Y}} \in {\cal P}_n({\cal X}\times {\cal Y})$,
let $P_{{\cal T}^n_{\bar{X}\bar{Y}}}$ be the uniform distribution on the joint type class ${\cal T}^n_{\bar{X}\bar{Y}}$.
The main result of this paper is the following reduction theorem claiming that 
we can construct a GW code from a given WAK code. 
\begin{theorem} \label{theorem:operational-connection}
For a given WAK code $\tilde{\Phi}_n = (\tilde{\varphi}_0,\tilde{\varphi}_2,\tilde{\psi})$ and a joint type $P_{{\cal T}^n_{\bar{X}\bar{Y}}}$ satisfying 
\begin{align}
\log |{\cal T}^n_{\bar{X}}| \ge \log |\tilde{{\cal M}}_0|, 
 \label{eq:condition-M0-X}
\end{align}
where ${\cal T}^n_{\bar{X}}$ is the type class of the marginal type $P_{\bar{X}}$, there exists a GW code $\Phi_n = (\varphi_0,\varphi_1,\varphi_2,\psi_1,\psi_2)$ such that 
\begin{align}
\log |{\cal M}_0| &\le \log |\tilde{{\cal M}}_0| + \log n + \log \log |{\cal X}| + 2, \label{eq:connection-1} \\
\log |{\cal M}_0| |{\cal M}_1| &\le \log |{\cal T}^n_{\bar{X}}| + \log n + \log \log |{\cal X}| + 2, \label{eq:connection-2} \\
\log |{\cal M}_2| &= \log |\tilde{{\cal M}}_2|, \label{eq:connection-3}
\end{align}
and 
\begin{align}
\rom{P}_{\mathtt{GW}}(\Phi_n | P_{{\cal T}^n_{\bar{X}\bar{Y}}}) \le \rom{P}_{\mathtt{WAK}}(\tilde{\Phi}_n | P_{{\cal T}^n_{\bar{X}\bar{Y}}}).
 \label{eq:connection-4}
\end{align}
\end{theorem}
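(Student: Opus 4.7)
The plan is to reuse the WAK encoders and decoder as the $Y$-branch of the GW code, and to design $\varphi_0$ and $\varphi_1$ by re-partitioning the type class ${\cal T}^n_{\bar X}$ along the preimages of $\tilde\varphi_0$, making the $X$-decoder errorless on the support of the source. I would set $\varphi_2(x^n,y^n):=\tilde\varphi_2(y^n)$, which gives \eqref{eq:connection-3} immediately, and $\psi_2(m_0,m_2):=\tilde\psi(g(m_0),m_2)$, where $g:{\cal M}_0\to\tilde{\cal M}_0$ is a ``forgetting'' map I will arrange so that $g(\varphi_0(x^n))=\tilde\varphi_0(x^n)$ for every $x^n\in{\cal T}^n_{\bar X}$. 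With these choices the $\psi_2$-error on ${\cal T}^n_{\bar X\bar Y}$ coincides exactly with the WAK-error event; the GW error will then equal the $\psi_1$-error, which I will make identically zero on the support, giving \eqref{eq:connection-4}.

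To build $\varphi_0$ and $\varphi_1$, set $N:=|{\cal T}^n_{\bar X}|$, $N_0:=|\tilde{\cal M}_0|$, and $L:=\lceil N/N_0\rceil$, and for each $m_0\in\tilde{\cal M}_0$ let $A_{m_0}:=\tilde\varphi_0^{-1}(m_0)\cap{\cal T}^n_{\bar X}$. Greedily partition each $A_{m_0}$ into sub-bins of size at most $L$. Let $\varphi_0(x^n)$ name the global sub-bin containing $x^n$, and let $\varphi_1(x^n)\in\{1,\ldots,L\}$ record the position of $x^n$ inside its sub-bin (and extend $\varphi_0,\varphi_1$ arbitrarily off ${\cal T}^n_{\bar X}$, which does not affect the error). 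Each sub-bin lies inside a unique $A_{m_0}$, so $g$ is well defined; and $(\varphi_0,\varphi_1)$ is injective on ${\cal T}^n_{\bar X}$, so the decoder $\psi_1$ that reads off the named element of the named sub-bin decodes $X^n$ without error there.

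The rate bounds reduce to a single counting inequality. The total number of sub-bins satisfies
\[
|{\cal M}_0|\le\sum_{m_0}\left\lceil\frac{|A_{m_0}|}{L}\right\rceil\le\frac{N}{L}+N_0\le 2N_0,
\]
where the last step uses $L\ge N/N_0$ and the hypothesis $N\ge N_0$ from \eqref{eq:condition-M0-X}. Hence $\log|{\cal M}_0|\le\log|\tilde{\cal M}_0|+1$, well within \eqref{eq:connection-1}, while
\[
|{\cal M}_0||{\cal M}_1|\le 2N_0\cdot L\le 2N+2N_0\le 4N,
\]
well within \eqref{eq:connection-2}.

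I do not foresee a serious obstacle: the argument is essentially one balanced re-partitioning step, and the slack $\log n+\log\log|{\cal X}|$ in \eqref{eq:connection-1}--\eqref{eq:connection-2} is far more than this construction needs. The one point requiring care is to partition each $A_{m_0}$ \emph{separately} so that the sub-bin index alone deterministically reveals $m_0$; without this the forgetting map $g$ would not exist and the reduction to the WAK decoder would break.
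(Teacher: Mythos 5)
Your proposal is correct and follows the same high-level construction as the paper: reuse $\tilde\varphi_2$ and $\tilde\psi$ for the $Y$-branch, re-partition each helper bin $\tilde\varphi_0^{-1}(m_0)\cap{\cal T}^n_{\bar X}$ into balanced sub-bins, let $\varphi_0$ name the sub-bin and $\varphi_1$ the index within it, and recover the WAK decoder via a forgetting map from sub-bins to original bins. The paper packages the re-partitioning step as a separate ``Balanced Code'' lemma (Lemma~3), and the point where you diverge is in how that re-partitioning is counted. The paper uses an information-spectrum slicing argument: it groups the original bins dyadically by size into $L_n+1 = \log|\tilde{\cal M}_0|+1$ slices, splits each bin in slice $i$ into $2^i$ sub-bins, and bounds the total by $(2L_n+1)|\tilde{\cal M}_0|$, which is where the $\log n + \log\log|{\cal X}|$ overhead in \eqref{eq:connection-1}--\eqref{eq:connection-2} comes from. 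You instead fix $L=\lceil N/N_0\rceil$, split every bin into $\lceil|A_{m_0}|/L\rceil$ pieces, and sum directly: $\sum_{m_0}\lceil|A_{m_0}|/L\rceil\le N/L+N_0\le 2N_0$ using \eqref{eq:condition-M0-X}. This is simpler, avoids the dyadic grouping entirely, and yields the sharper constants $\log|{\cal M}_0|\le\log|\tilde{\cal M}_0|+1$ and $\log|{\cal M}_0||{\cal M}_1|\le\log|{\cal T}^n_{\bar X}|+2$, well within the $\log n + \log\log|{\cal X}|+2$ slack the theorem allows. Both arguments are valid; yours is the more elementary and gives a strictly better bound on the penalty, so it would in fact simplify the constants $\tilde r_{0,n}, \delta_n$ in Corollary~2 downstream.

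Two small points you handled correctly and that are worth noting: you partition each $A_{m_0}$ \emph{separately} so that the global sub-bin index determines $m_0$ (making the forgetting map $g$ well-defined), and you rely on $N\ge N_0$ from \eqref{eq:condition-M0-X} in the step $2N+2N_0\le 4N$; without that hypothesis the count would not close. Your error analysis also matches the paper's: $\psi_1$ is errorless on ${\cal T}^n_{\bar X}$ because $(\varphi_0,\varphi_1)$ is injective there, and the $\psi_2$-error coincides exactly with the original WAK error event, giving \eqref{eq:connection-4} with equality.
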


Next, we shall derive a converse bound on the WAK network by combining Theorem \ref{theorem:operational-connection} with a converse bound on the GW network. 
For that purpose, let us introduce a slightly relaxed version ${\cal R}^*_{\mathtt{WAK}}(\delta|P_{XY})$
of ${\cal R}^*_{\mathtt{WAK}}(P_{XY})$ as follows. For $\delta > 0$, let ${\cal R}^*_{\mathtt{WAK}}(\delta|P_{XY})$ be
the set of all rate pairs $(r_0,r_2)$ such that there exists a test channel $P_{W|XY}$ with
$|{\cal W}| \le |{\cal X}||{\cal Y}|+2$ satisfying 
\begin{align*}
r_0 \ge I(W \wedge X,Y),~~~ r_2 \ge H(Y|W),~~~ \delta \ge I(W \wedge Y | X).
\end{align*}
Note that ${\cal R}^*_{\mathtt{WAK}}(0|P_{XY}) = {\cal R}^*_{\mathtt{WAK}}(P_{XY})$.
\begin{corollary} \label{corollary:WAK-bound}
For a given WAK code $\tilde{\Phi}_n = (\tilde{\varphi}_0,\tilde{\varphi}_2,\tilde{\psi})$, it hold that 
\begin{align*}
\lefteqn{ \rom{P}_{\mathtt{WAK}}(\tilde{\Phi}_n | P^n_{XY})  } \\
& \ge \rom{P}\bigg( (\tilde{r}_{0,n}, \tilde{r}_{2,n}) \notin {\cal R}^*_{\mathtt{WAK}}(\delta_n | \san{t}_{X^nY^n}),~\san{t}_{X^n} \in {\cal E}_n \bigg)\bigg(1-\frac{1}{n} \bigg),
\end{align*}
where $\san{t}_{X^n Y^n}$ is the joint type of $(X^n,Y^n)$, 
\begin{align*}
\tilde{r}_{0,n} &:= \frac{1}{n} \log |\tilde{{\cal M}}_0| + \Delta_n + \frac{\log\log |{\cal X}| +2}{n}, \\
\tilde{r}_{2,n} &:= \frac{1}{n} \log |\tilde{{\cal M}}_2| + \frac{1 + \log |{\cal Y}|}{n}, \\
\delta_n &:= \Delta_n + \frac{\log\log |{\cal X}| + 3 + \log |{\cal X}|}{n},
\end{align*}
$\Delta_n = \frac{(|{\cal X}|(|{\cal Y}|+1)+3)\log(n+1)}{n}$, and ${\cal E}_n$ is the set of types defined by
\begin{align*}
{\cal E}_n := \bigg\{ P_{\bar{X}}  : H(\bar{X}) \ge \frac{1}{n} \log |\tilde{{\cal M}}_0| + \frac{|{\cal X}|\log(n+1)}{n} \bigg\}.
\end{align*}
\end{corollary}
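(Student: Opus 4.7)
The plan is to apply the reduction of Theorem~\ref{theorem:operational-connection} type by type and then invoke a single-shot converse for the GW network on each joint type class. To begin, I would split the WAK error according to the joint type of $(X^n,Y^n)$,
\begin{align*}
\rom{P}_{\mathtt{WAK}}(\tilde{\Phi}_n | P^n_{XY}) = \sum_{P_{\bar{X}\bar{Y}}} P^n_{XY}({\cal T}^n_{\bar{X}\bar{Y}})\, \rom{P}_{\mathtt{WAK}}(\tilde{\Phi}_n | P_{{\cal T}^n_{\bar{X}\bar{Y}}}),
\end{align*}
and concentrate on joint types whose $\bar{X}$-marginal lies in ${\cal E}_n$. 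The standard estimate $|{\cal T}^n_{\bar{X}}| \ge (n+1)^{-|{\cal X}|} 2^{nH(\bar{X})}$ combined with the definition of ${\cal E}_n$ guarantees that condition \eqref{eq:condition-M0-X} is met, so Theorem~\ref{theorem:operational-connection} produces, for each such type, a GW code $\Phi_n$ for the uniform distribution $P_{{\cal T}^n_{\bar{X}\bar{Y}}}$ whose error is dominated by that of $\tilde{\Phi}_n$ on the same source.

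Second, I would invoke a type-based single-shot converse for the GW network of the style used in \cite{GuEff:09, Watanabe:15}. Such a bound asserts that whenever $\rom{P}_{\mathtt{GW}}(\Phi_n | P_{{\cal T}^n_{\bar{X}\bar{Y}}}) < 1 - 1/n$, the normalized rates of $\Phi_n$ are realized, within a slack of order $\Delta_n$, by a test channel $P_{W|\bar{X}\bar{Y}}$ with $|{\cal W}| \le |{\cal X}||{\cal Y}|+2$; the Fano step applied to the $Y^n$-decoder supplies the $(1+\log|{\cal Y}|)/n$ correction that appears in $\tilde{r}_{2,n}$.

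Third, I would lift this GW rate description into the relaxed WAK region. Inequalities \eqref{eq:connection-1} and \eqref{eq:connection-3} together with the GW converse give $\tilde{r}_{0,n} \ge I(W \wedge \bar{X}\bar{Y})$ and $\tilde{r}_{2,n} \ge H(\bar{Y}|W)$ after absorbing the $\log n$, $\log\log|{\cal X}|$, and Fano logarithms into $\Delta_n$ and the $\tilde{r}_{2,n}$ correction. Meanwhile \eqref{eq:connection-2} combined with $\log|{\cal T}^n_{\bar{X}}| \le nH(\bar{X})$ yields $I(W \wedge \bar{X}\bar{Y}) + H(\bar{X}|W) \le H(\bar{X}) + O(\log n / n)$, and the identity $H(\bar{X}) + I(W \wedge \bar{Y}|\bar{X}) = I(W \wedge \bar{X}\bar{Y}) + H(\bar{X}|W)$ then bounds $I(W \wedge \bar{Y}|\bar{X})$ by $\delta_n$. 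Hence $(\tilde{r}_{0,n},\tilde{r}_{2,n}) \in {\cal R}^*_{\mathtt{WAK}}(\delta_n | P_{\bar{X}\bar{Y}})$.

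Taking the contrapositive, any joint type with $\san{t}_{X^n} \in {\cal E}_n$ and $(\tilde{r}_{0,n},\tilde{r}_{2,n}) \notin {\cal R}^*_{\mathtt{WAK}}(\delta_n | \san{t}_{X^nY^n})$ forces $\rom{P}_{\mathtt{GW}}(\Phi_n | P_{{\cal T}^n_{\bar{X}\bar{Y}}}) \ge 1 - 1/n$, and by \eqref{eq:connection-4} the same lower bound holds for $\rom{P}_{\mathtt{WAK}}(\tilde{\Phi}_n | P_{{\cal T}^n_{\bar{X}\bar{Y}}})$; summing these contributions across the bad types completes the proof. The principal obstacle I anticipate is extracting the single-shot GW converse in a form whose slack terms and cardinality bound collapse exactly into the prescribed $\Delta_n$, $\delta_n$, and $|{\cal W}| \le |{\cal X}||{\cal Y}|+2$, after which the rest is careful bookkeeping of Fano and type-size logarithms.
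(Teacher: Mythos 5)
Your proposal matches the paper's proof step for step: the type decomposition, verifying \eqref{eq:condition-M0-X} for types in ${\cal E}_n$ via the bound $|{\cal T}^n_{\bar{X}}| \ge (n+1)^{-|{\cal X}|}2^{nH(\bar{X})}$, applying the type-based GW converse (Lemma~\ref{lemma:converse-GW} with $\alpha_n=\beta_n=\frac{\log n}{n}$), lifting to ${\cal R}^*_{\mathtt{WAK}}(\delta_n|\cdot)$ via the identity $H(\bar{X})+I(W\wedge\bar{Y}|\bar{X})=I(W\wedge\bar{X},\bar{Y})+H(\bar{X}|W)$, and taking the contrapositive. The "bookkeeping" you flag is exactly what the paper carries out, and it collapses into the stated $\Delta_n$, $\delta_n$ with no surprises.
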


By noting the continuity of region ${\cal R}^*_{\mathtt{WAK}}(\delta|P_{XY})$ at $\delta = 0$, we can show the following
strong converse theorem for the WAK network. 
\begin{corollary}[\cite{ahlswede:76}] \label{corollary:strong-converse}
If $(r_0, r_2) \notin {\cal R}_{\mathtt{WAK}}^*(P_{XY})$ and $r_0 < H(X)$,\footnote{Our approach only gives the strong converse 
theorem under the condition $r_0 < H(X)$; however, the strong converse theorem for the WAK network is known to hold without this
condition \cite{ahlswede:76}. In fact, for $r_0 \ge H(X)$, it can be shown as the strong converse theorem for
the Slepian-Wolf network with full side-information.} 
then for any sequence of WAK codes $\{ \tilde{\Phi}_n \}_{n=1}^\infty$ satisfying
\begin{align}
\limsup_{n\to\infty} \frac{1}{n} \log |\tilde{{\cal M}}^{(n)}_0| &\le r_0, \label{eq:rate-condition-0} \\
\limsup_{n\to\infty} \frac{1}{n} \log |\tilde{{\cal M}}^{(n)}_2| &\le r_2, \label{eq:rate-condition-2}
\end{align}
it holds that 
\begin{align}
\lim_{n\to\infty} \rom{P}_{\mathtt{WAK}}(\tilde{\Phi}_n | P_{XY}^n) = 1. \label{eq:convergence-to-one}
\end{align}
\end{corollary}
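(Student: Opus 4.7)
The plan is to invoke Corollary \ref{corollary:WAK-bound} and show that its lower bound converges to $1$. The factor $(1 - 1/n)$ is trivial, and the rate conditions \eqref{eq:rate-condition-0}--\eqref{eq:rate-condition-2} together with $\Delta_n \to 0$ imply $\limsup_n \tilde{r}_{0,n} \le r_0$, $\limsup_n \tilde{r}_{2,n} \le r_2$, and $\delta_n \to 0$. Thus it suffices to show
\[
\rom{P}\bigg( (\tilde{r}_{0,n}, \tilde{r}_{2,n}) \notin {\cal R}^*_{\mathtt{WAK}}(\delta_n | \san{t}_{X^nY^n}),~\san{t}_{X^n} \in {\cal E}_n \bigg) \to 1.
\]

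For the event $\san{t}_{X^n} \in {\cal E}_n$, the hypothesis $r_0 < H(X)$ plays the essential role: it guarantees that, for large $n$, the threshold $\frac{1}{n}\log|\tilde{{\cal M}}_0^{(n)}| + \frac{|{\cal X}|\log(n+1)}{n}$ is bounded above by some constant strictly less than $H(X)$. Since the empirical distribution $\san{t}_{X^n}$ concentrates at $P_X$ and entropy is continuous on the simplex over a finite alphabet, $H(\san{t}_{X^n}) \to H(X)$ in probability, so $\rom{P}(\san{t}_{X^n} \in {\cal E}_n) \to 1$.

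The main step is to show $\rom{P}\bigl((\tilde{r}_{0,n},\tilde{r}_{2,n}) \in {\cal R}^*_{\mathtt{WAK}}(\delta_n | \san{t}_{X^nY^n})\bigr) \to 0$. I would argue by contradiction. A failure would give, along some subsequence, realizations of $\san{t}_{X^nY^n}$ (which converges to $P_{XY}$ in probability) inside arbitrarily small neighborhoods of $P_{XY}$ that still admit test channels $P_{W|XY}^{(n)}$ with $|{\cal W}| \le |{\cal X}||{\cal Y}|+2$ satisfying the three defining inequalities of ${\cal R}^*_{\mathtt{WAK}}(\delta_n | \cdot)$. The space of such conditional distributions is compact, so after extracting a further subsequence the test channels converge to some $P_{W|XY}^*$. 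Passing to the limit in the three inequalities, using continuity of $I(W \wedge X,Y)$, $H(Y|W)$, and $I(W \wedge Y | X)$ as functions of the joint distribution on ${\cal W} \times {\cal X} \times {\cal Y}$, yields $r_0 \ge I(W \wedge X,Y)$, $r_2 \ge H(Y|W)$, and $0 \ge I(W \wedge Y|X)$. The last inequality forces the Markov chain $W \markov X \markov Y$, so the limit witnesses $(r_0,r_2) \in {\cal R}^*_{\mathtt{WAK}}(P_{XY})$, contradicting the hypothesis.

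The hard part is this continuity/compactness argument, which is in essence the outer semicontinuity of ${\cal R}^*_{\mathtt{WAK}}(\delta | Q)$ jointly in $(\delta, Q)$ at $(0, P_{XY})$. The uniform cardinality bound $|{\cal W}| \le |{\cal X}||{\cal Y}|+2$ in the definition is exactly what makes this clean, reducing the limit extraction to a finite-dimensional compact set; it also explains why the footnote uses this slightly loose bound rather than the tight $|{\cal X}|+1$, since along the approximating subsequence the test channels need not respect $W \markov X \markov Y$ until we pass to the limit.
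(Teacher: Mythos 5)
Your proposal is correct, and the overall skeleton (invoke Corollary \ref{corollary:WAK-bound}, show both parts of the event go to probability one, use $r_0 < H(X)$ for the $\san{t}_{X^n}\in{\cal E}_n$ part, use $\delta_n\to 0$ and $(r_0,r_2)\notin{\cal R}^*_{\mathtt{WAK}}(P_{XY})$ for the main part) matches the paper. But your key step --- showing $\rom{P}\bigl((\tilde r_{0,n},\tilde r_{2,n})\in{\cal R}^*_{\mathtt{WAK}}(\delta_n\mid\san{t}_{X^nY^n})\bigr)\to 0$ --- is argued differently. You use a compactness/outer-semicontinuity argument by contradiction: extract a subsequence of joint types $Q_n\to P_{XY}$ and test channels (compact because $|{\cal W}|$ is uniformly bounded) converging to a limit witnessing $(r_0,r_2)\in{\cal R}^*_{\mathtt{WAK}}(P_{XY})$. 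The paper instead works with supporting hyperplanes: it defines $R_\mu(\delta\mid P_{XY})=\min\{r_0+\mu r_2:(r_0,r_2)\in{\cal R}^*_{\mathtt{WAK}}(\delta\mid P_{XY})\}$, proves continuity of $R_\mu$ in $\delta$ at $0$ (Lemma~4, via Pinsker applied to $D(P_{WXY}\Vert P_{W|X}P_{XY})=I(W\wedge Y|X)\le\delta$ and continuity of entropy --- no compactness needed) and in $P_{XY}$ (Lemma~5), then picks $\mu,\nu$ with $r_0+\mu r_2\le R_\mu(P_{XY})-(3+\mu)\nu$ and propagates the slack through explicit inequalities on the subset ${\cal K}_n$ of types within $\sqrt{(\log n)/n}$ of $P_{XY}$ (Hoeffding giving $1-2|{\cal X}||{\cal Y}|/n^2$). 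The trade-off: your route is shorter and conceptually cleaner, but purely qualitative; the paper's route is more constructive and keeps explicit control of all slack terms, which is closer in spirit to the $O((\log n)/n)$ residuals of Corollary~\ref{corollary:WAK-bound} and would be the natural starting point if one wanted quantitative (e.g., second-order or exponential) refinements. Your observation about the role of the uniform cardinality bound is correct; just note that the paper's Lemma~4 sidesteps compactness entirely by modifying the near-optimal test channel to $P_{W|X}P_{XY}$ directly and invoking Pinsker, which is a useful trick worth internalizing.
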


\section{Proofs} \label{sec:proofs}

\subsection{Proof of Theorem \ref{theorem:operational-connection}}

For a given WAK code $\tilde{\Phi}_n=(\tilde{\varphi}_0,\tilde{\varphi}_2,\tilde{\psi})$,
the encoder $\tilde{\varphi}_0$ induces a partition $\tilde{\varphi}_0^{-1}(m) \cap {\cal T}_{\bar{X}}^n$,
$m \in \tilde{{\cal M}}_0$ of the type class ${\cal T}_{\bar{X}}^n$. 
Basic strategy to construct encoder $\varphi_1$ is to assign distinct codewords to each element in $\tilde{\varphi}_0^{-1}(m) \cap {\cal T}_{\bar{X}}^n$;
however, some partitions may have much larger cardinality than others. 
The following lemma states that, with a negligible penalty rate, we can construct a modified WAK code having  ``balanced" property,  
from which Theorem \ref{theorem:operational-connection} follows immediately. 
\begin{lemma}[Balanced Code] \label{lemma:balanced}
For a given WAK code $\tilde{\Phi}_n=(\tilde{\varphi}_0,\tilde{\varphi}_2,\tilde{\psi})$ and a joint type $P_{\bar{X}\bar{Y}}$ 
satisfying \eqref{eq:condition-M0-X}, there exists another WAK code $\hat{\Phi}_n = (\hat{\varphi}_0,\hat{\varphi}_2,\hat{\psi})$ such that 
\begin{align}
\log |\hat{{\cal M}}_0| &\le \log |\tilde{{\cal M}}_0| + \log n + \log\log |{\cal X}| + 2, \label{eq:balanced-1} \\
\log |\hat{{\cal M}}_2| &= \log |\tilde{{\cal M}}_2|, \label{eq:balanced-2}  \\
\rom{P}_{\mathtt{WAK}}(\hat{\Phi}_n| P_{{\cal T}_{\bar{X}\bar{Y}}^n}) &\le \rom{P}_{\mathtt{WAK}}(\tilde{\Phi}_n| P_{{\cal T}_{\bar{X}\bar{Y}}^n}), \label{eq:balanced-3}
\end{align}
and 
\begin{align}
\log |\hat{\varphi}_0^{-1}(m) \cap {\cal T}_{\bar{X}}^n| \le \log \frac{|{\cal T}_{\bar{X}}^n|}{|\tilde{{\cal M}}_0|} \label{eq:balanced-4}
\end{align}
for every $m \in \hat{{\cal M}}_0$.
\end{lemma}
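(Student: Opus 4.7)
The plan is to refine the partition of the type class ${\cal T}^n_{\bar{X}}$ induced by $\tilde{\varphi}_0$ so that every resulting bin has cardinality at most $L := \lfloor |{\cal T}^n_{\bar{X}}|/|\tilde{{\cal M}}_0|\rfloor$ (which is at least $1$ by the hypothesis \eqref{eq:condition-M0-X}), and then pair this refined encoder with a decoder that simply ignores the new subfragment index and invokes $\tilde{\psi}$. Because the refined encoder conveys strictly more information to the decoder than $\tilde{\varphi}_0$ does, the error event is unchanged pointwise, so \eqref{eq:balanced-3} will hold with equality on every input distribution, not just on $P_{{\cal T}^n_{\bar{X}\bar{Y}}}$.

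Concretely, for each $m \in \tilde{{\cal M}}_0$ I would partition $\tilde{\varphi}_0^{-1}(m)\cap {\cal T}^n_{\bar{X}}$ arbitrarily into $K_m := \max\{1, \lceil |\tilde{\varphi}_0^{-1}(m)\cap {\cal T}^n_{\bar{X}}|/L\rceil\}$ sub-bins of cardinality at most $L$. Setting $\hat{{\cal M}}_0 := \{(m,k) : m \in \tilde{{\cal M}}_0,\, 1 \le k \le K_m\}$, defining $\hat{\varphi}_0(x^n) := (m,k)$ when $x^n \in {\cal T}^n_{\bar{X}}$ lies in the $k$-th sub-bin of $\tilde{\varphi}_0^{-1}(m)$ (and $\hat{\varphi}_0(x^n) := (\tilde{\varphi}_0(x^n),1)$ otherwise), and putting $\hat{\varphi}_2 := \tilde{\varphi}_2$ together with $\hat{\psi}((m,k),m') := \tilde{\psi}(m,m')$, the bin-size bound \eqref{eq:balanced-4} is immediate from the construction and \eqref{eq:balanced-2} is trivial.

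For the rate bound \eqref{eq:balanced-1}, a direct count gives $|\hat{{\cal M}}_0| = \sum_{m \in \tilde{{\cal M}}_0} K_m \le |{\cal T}^n_{\bar{X}}|/L + |\tilde{{\cal M}}_0|$. The elementary inequality $\lfloor x\rfloor \ge x/2$ for $x\ge 1$, applied to $x = |{\cal T}^n_{\bar{X}}|/|\tilde{{\cal M}}_0|$, yields $L \ge |{\cal T}^n_{\bar{X}}|/(2|\tilde{{\cal M}}_0|)$, hence $|\hat{{\cal M}}_0|\le 3|\tilde{{\cal M}}_0|$, so $\log|\hat{{\cal M}}_0| \le \log|\tilde{{\cal M}}_0| + 2$. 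This is comfortably tighter than the slack $\log n + \log\log|{\cal X}| + 2$ offered by \eqref{eq:balanced-1}, so the claim follows.

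The main point requiring care is to keep the construction well-defined at the boundary: one must use the floor in the definition of $L$ so that \eqref{eq:balanced-4} survives taking logarithms, invoke \eqref{eq:condition-M0-X} to ensure $L\ge 1$ so that partitioning is meaningful at all, and pad empty bins to $K_m = 1$ to accommodate the ``$x^n \notin {\cal T}^n_{\bar{X}}$'' default case. Beyond this bookkeeping, the argument is a clean bin-packing and no information-theoretic machinery is needed; the apparent hardness of the lemma is entirely absorbed by the simple observation that sending extra bits to the decoder cannot hurt.
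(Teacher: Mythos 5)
Your construction is correct, and it is genuinely different from the paper's. The paper proves the lemma via a dyadic ``information-spectrum slicing'' argument: it groups the bins $m$ into $L_n+1$ slices according to which power-of-two window $\bigl(\frac{|{\cal T}^n_{\bar X}|}{|\tilde{\cal M}_0|}2^{i-1},\ \frac{|{\cal T}^n_{\bar X}|}{|\tilde{\cal M}_0|}2^i\bigr]$ contains $|\tilde\varphi_0^{-1}(m)\cap{\cal T}^n_{\bar X}|$, shows (by a volume contradiction) that slice $i$ has at most $|\tilde{\cal M}_0|/2^{i-1}$ bins, subdivides each bin in slice $i$ into $2^i$ pieces, and then sums over the $L_n+1 \le n\log|{\cal X}|+1$ slices. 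That summation over slices is exactly what produces the $\log n + \log\log|{\cal X}|$ slack in \eqref{eq:balanced-1}. Your argument sidesteps the slicing entirely: you subdivide each bin directly into pieces of size at most $L=\lfloor|{\cal T}^n_{\bar X}|/|\tilde{\cal M}_0|\rfloor$ and bound $\sum_m K_m$ by $\sum_m a_m/L + |\tilde{\cal M}_0| = |{\cal T}^n_{\bar X}|/L + |\tilde{\cal M}_0| \le 3|\tilde{\cal M}_0|$, yielding $\log|\hat{\cal M}_0| \le \log|\tilde{\cal M}_0| + \log 3$. This is both simpler and strictly tighter than what \eqref{eq:balanced-1} asks for; the slicing machinery was never needed for this lemma. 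Your handling of the boundary details (using the floor so \eqref{eq:balanced-4} holds after taking logs, using \eqref{eq:condition-M0-X} to guarantee $L\ge 1$, padding empty bins to $K_m=1$, and having the decoder simply discard the subindex so the error event is pointwise identical) is all sound, and the observation that \eqref{eq:balanced-3} then holds with equality on \emph{any} input distribution is a nice strengthening of the claim.
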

\begin{proof}
Let\footnote{For simplicity, we assume $\log |\tilde{{\cal M}}_0^{(n)}| $ is an integer. } 
\begin{align*}
L_n := \log |\tilde{{\cal M}}_0|  
\le n \log |{\cal X}| .
\end{align*}
Let\footnote{This step is inspired by the information-spectrum slicing \cite{han:book}.} 
\begin{align*}
\tilde{{\cal M}}_0 = \bigcup_{i=0}^{L_n} \tilde{{\cal M}}_0(i)
\end{align*}
be the partition of $\tilde{{\cal M}}_0$, where 
\begin{align*}
\tilde{{\cal M}}_0(i) = \bigg\{ m  : \frac{|{\cal T}_{\bar{X}}^n|}{|\tilde{{\cal M}}_0|} 2^{(i-1)} < 
  |\tilde{\varphi}_0^{-1}(m) \cap {\cal T}_{\bar{X}}^n |  \le  \frac{|{\cal T}_{\bar{X}}^n|}{|\tilde{{\cal M}}_0|} 2^i \bigg\}
\end{align*}
for $1 \le i \le L_n$ and 
\begin{align*}
\tilde{{\cal M}}_0(0) = \bigg\{ m : |\tilde{\varphi}_0^{-1}(m) \cap {\cal T}_{\bar{X}}^n |  \le \frac{|{\cal T}_{\bar{X}}^n|}{|\tilde{{\cal M}}_0^{(n)}|} \bigg\}.
\end{align*}
Then, for $1 \le i \le L_n$, we have
\begin{align} \label{eq:bound-for-ith-slice}
|\tilde{{\cal M}}_0(i) | \le \frac{|\tilde{{\cal M}}_0|}{2^{(i-1)}};
\end{align}
otherwise, we have
\begin{align*}
\left| \bigcup_{m \in \tilde{{\cal M}}_0(i)} \tilde{\varphi}_0^{-1}(m) \cap {\cal T}_{\bar{X}}^n \right| > |{\cal T}_{\bar{X}}^n|,
\end{align*}
which is a contradiction. To construct $\hat{\varphi}_0$, for each $1 \le i \le L_n$ and $m \in \tilde{{\cal M}}_0(i)$,
we further partition $\tilde{\varphi}_0^{-1}(m)$ into $2^i$ subsets so that 
\begin{align*}
|\hat{\varphi}_0^{-1}(\hat{m}) \cap {\cal T}_{\bar{X}}^n | &\le \frac{|{\cal T}_{\bar{X}}^n|}{|\tilde{{\cal M}}_0|}
\end{align*}
for every $\hat{m} \in \hat{{\cal M}}_0(i)$, where $\hat{{\cal M}}_0(i)$ is the set of indices induced by such a partition. 
Then, we have
\begin{align*}
| \hat{{\cal M}}_0(i) | = 2^i |\tilde{{\cal M}}_0(i) | \le 2 |\tilde{{\cal M}}_0|,
\end{align*}
where the last inequality follows from \eqref{eq:bound-for-ith-slice}.
For $m \in \tilde{{\cal M}}_0(0)$, 
we keep $\hat{\varphi}_0^{-1}(m)$ unchanged from $\tilde{\varphi}_0^{-1}(m)$, and thus $\hat{{\cal M}}_0(0) = \tilde{{\cal M}}_0(0)$.
On the other hand, we set $\hat{\varphi}_2 = \tilde{\varphi}_2$. 
By noting
\begin{align*}
|\hat{{\cal M}}_0| = \sum_{i=0}^{L_n} |\hat{{\cal M}}_0(i)| 
\le (2L_n + 1) |\tilde{{\cal M}}_0| 
\le 4 L_n |\tilde{{\cal M}}_0|,
\end{align*}
we can verify that the encoders constructed in
this manner satisfy \eqref{eq:balanced-1}, \eqref{eq:balanced-2}, and \eqref{eq:balanced-4}. 
Furthermore, since $\hat{\varphi}_0$ is finer than $\tilde{\varphi}_0$, we can construct a decoder $\hat{\psi}$ 
satisfying \eqref{eq:balanced-3}.
\end{proof}

Now, we prove Theorem \ref{theorem:operational-connection}.
For a given WAK code $\tilde{\Phi}_n = (\tilde{\varphi}_0,\tilde{\varphi}_2,\tilde{\psi})$, by Lemma \ref{lemma:balanced},
we can construct a WAK code $\hat{\Phi}_n = (\hat{\varphi}_0,\hat{\varphi}_2,\hat{\psi})$
satisfying \eqref{eq:balanced-1}-\eqref{eq:balanced-4}. We set $\varphi_0 = \hat{\varphi}_0$ and $\varphi_2 = \hat{\varphi}_2$. 
We take ${\cal M}_1$ so that 
\begin{align*}
|{\cal M}_1| = \max_{ m \in {\cal M}_0} | \varphi_0^{-1}(m) \cap {\cal T}_{\bar{X}}^n |,
\end{align*}
and we construct $\varphi_1$ so that distinct numbers are assigned to the elements
in $\varphi_0^{-1}(m) \cap {\cal T}_{\bar{X}}^n$ for each $m \in {\cal M}_0$.
By \eqref{eq:balanced-1}, \eqref{eq:balanced-2}, and \eqref{eq:balanced-4}, the encoders 
constructed in this manner satisfy \eqref{eq:connection-1}-\eqref{eq:connection-3}. 
Furthermore, since $(\varphi_0(\bm{x}),\varphi_1(\bm{x})) \neq (\varphi_0(\bm{x}^\prime),\varphi_1(\bm{x}^\prime))$
for any $\bm{x} \neq \bm{x}^\prime \in {\cal T}_{\bar{X}}^n$, there exists a decoder $\psi_1$ that can 
reconstruct $X^n$ without an error under the distribution $P_{{\cal T}_{\bar{X}\bar{Y}}^n}$. 
Thus, by using $\hat{\psi}$ for $\psi_2$, \eqref{eq:connection-4} is also satisfied. \qed

\subsection{Proof of Corollary \ref{corollary:WAK-bound}}

To prove Corollary \ref{corollary:WAK-bound}, we combine Theorem \ref{theorem:operational-connection}
with the following converse bound on the GW network, which is a type based refinement of the 
strong converse of the GW network derived in \cite{GuEff:09}.
\begin{lemma}(\cite[Lemma 6]{Watanabe:15}) \label{lemma:converse-GW}
For a given GW code $\Phi_n$, suppose that the probability of error satisfies 
\begin{align*}
1 - \rom{P}_{\mathtt{GW}}(\Phi_n | P_{{\cal T}_{\bar{X}\bar{Y}}^n}) \ge 2^{-n\alpha_n}
\end{align*}
for some positive $\alpha_n$. Let $\beta_n$ be another positive number. Then there exists
$P_{\bar{W}|\bar{X}\bar{Y}}$ with $|{\cal W}| \le |{\cal X}||{\cal Y}|+2$ such that 
\begin{align*}
\frac{1}{n} \log |{\cal M}_0^{(n)}| &\ge I(\bar{W} \wedge \bar{X},\bar{Y}) \\ 
 &~~ - \frac{|{\cal X}||{\cal Y}| \log (n+1)}{n} - (\alpha_n + \beta_n), \\
\frac{1}{n} \log |{\cal M}_1^{(n)}| &\ge H(\bar{X}|\bar{W}) - \frac{1}{n} - 2^{-n\beta_n} \log |{\cal X}|, \\
\frac{1}{n} \log |{\cal M}_2^{(n)}| &\ge H(\bar{Y}|\bar{W}) - \frac{1}{n} - 2^{-n\beta_n} \log |{\cal Y}|,
\end{align*}
where $(\bar{X},\bar{Y}) \sim P_{\bar{X}\bar{Y}}$.
\end{lemma}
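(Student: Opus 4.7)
The plan is to prove the corollary by decomposing the WAK error probability over joint types and showing that for each ``bad'' type the conditional WAK error probability is at least $1-1/n$. Starting from the identity
\[
\rom{P}_{\mathtt{WAK}}(\tilde{\Phi}_n|P_{XY}^n) = \sum_{P_{\bar{X}\bar{Y}}} P^n_{XY}({\cal T}^n_{\bar{X}\bar{Y}}) \rom{P}_{\mathtt{WAK}}(\tilde{\Phi}_n|P_{{\cal T}^n_{\bar{X}\bar{Y}}}),
\]
I restrict attention to those joint types whose $X$-marginal $P_{\bar{X}}$ lies in ${\cal E}_n$. For such types, the standard type-size bound $\log|{\cal T}^n_{\bar{X}}| \ge nH(\bar{X}) - |{\cal X}|\log(n+1)$ together with the definition of ${\cal E}_n$ verifies hypothesis \eqref{eq:condition-M0-X} of Theorem \ref{theorem:operational-connection}.

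For each such type, I argue by contraposition. Suppose that $\rom{P}_{\mathtt{WAK}}(\tilde{\Phi}_n|P_{{\cal T}^n_{\bar{X}\bar{Y}}}) < 1 - 1/n$. Then Theorem \ref{theorem:operational-connection} produces a GW code $\Phi_n$ with $\rom{P}_{\mathtt{GW}}(\Phi_n|P_{{\cal T}^n_{\bar{X}\bar{Y}}}) < 1 - 1/n$ by \eqref{eq:connection-4}, so $1 - \rom{P}_{\mathtt{GW}} > 1/n = 2^{-\log n}$. Applying Lemma \ref{lemma:converse-GW} with the choice $\alpha_n = \beta_n = \log(n+1)/n$ (so that $2^{-n\alpha_n} \le 1/n$ and $2^{-n\beta_n} = 1/(n+1)$) yields a test channel $P_{\bar{W}|\bar{X}\bar{Y}}$ with $|{\cal W}| \le |{\cal X}||{\cal Y}|+2$ and three lower bounds on the GW rates.

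Chaining these lower bounds against the upper bounds \eqref{eq:connection-1}--\eqref{eq:connection-3} produces three inequalities. From $r_0^{\mathtt{GW}}$ I obtain $I(\bar{W}\wedge\bar{X},\bar{Y}) \le \tilde{r}_{0,n}$, after collecting $\alpha_n+\beta_n$, the Lemma's $|{\cal X}||{\cal Y}|\log(n+1)/n$ slack, and the encoder overhead $\frac{\log n + \log\log|{\cal X}|+2}{n}$ into $\Delta_n + \frac{\log\log|{\cal X}|+2}{n}$. From $r_2^{\mathtt{GW}}$ I obtain $H(\bar{Y}|\bar{W}) \le \tilde{r}_{2,n}$, since $\frac{1}{n}+\frac{\log|{\cal Y}|}{n+1} \le \frac{1+\log|{\cal Y}|}{n}$. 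From the sum bound on $r_0^{\mathtt{GW}}+r_1^{\mathtt{GW}}$, combined with the key identity $I(\bar{W}\wedge\bar{X},\bar{Y}) + H(\bar{X}|\bar{W}) = H(\bar{X}) + I(\bar{W}\wedge\bar{Y}|\bar{X})$ and the estimate $\frac{1}{n}\log|{\cal T}^n_{\bar{X}}| \le H(\bar{X})$, I extract $I(\bar{W}\wedge\bar{Y}|\bar{X}) \le \delta_n$. These three inequalities together witness that $(\tilde{r}_{0,n},\tilde{r}_{2,n}) \in {\cal R}^*_{\mathtt{WAK}}(\delta_n|P_{\bar{X}\bar{Y}})$, contradicting the hypothesis that the type is bad.

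Hence $\rom{P}_{\mathtt{WAK}}(\tilde{\Phi}_n|P_{{\cal T}^n_{\bar{X}\bar{Y}}}) \ge 1 - 1/n$ for every bad type, and summing $P^n_{XY}({\cal T}^n_{\bar{X}\bar{Y}})(1-1/n)$ over such types, via the initial type decomposition, yields the stated lower bound. The main obstacle is bookkeeping: one must verify that the exact constants $\Delta_n$ and $\delta_n$ in the statement are large enough to simultaneously absorb $\alpha_n+\beta_n$, the $2^{-n\beta_n}$ tails, the overhead terms from Theorem \ref{theorem:operational-connection}, and the type-size slack $|{\cal X}|\log(n+1)/n$. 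The conceptual step is the use of the identity in the third inequality, which is where the sum-rate bound $r_0+r_1 \lesssim H(\bar{X})$ for the constructed GW code gets translated into the relaxed Markov condition $I(\bar{W}\wedge\bar{Y}|\bar{X}) \le \delta_n$ encoded in ${\cal R}^*_{\mathtt{WAK}}(\delta_n|\cdot)$.
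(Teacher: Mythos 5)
Your proposal does not prove the statement in question. The statement is Lemma \ref{lemma:converse-GW} itself --- the type-based strong converse bound for the GW network asserting the existence of a test channel $P_{\bar{W}|\bar{X}\bar{Y}}$ with $|{\cal W}| \le |{\cal X}||{\cal Y}|+2$ satisfying the three rate lower bounds whenever $1-\rom{P}_{\mathtt{GW}}(\Phi_n|P_{{\cal T}^n_{\bar{X}\bar{Y}}}) \ge 2^{-n\alpha_n}$. What you wrote is a proof of Corollary \ref{corollary:WAK-bound}: you decompose the WAK error probability over joint types, invoke Theorem \ref{theorem:operational-connection} to pass to a GW code, and then \emph{apply Lemma \ref{lemma:converse-GW} as a black box} with $\alpha_n=\beta_n \approx \frac{\log n}{n}$. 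Using the very statement to be proved as an ingredient is circular, so as a proof of the lemma the proposal is vacuous. (For what it is worth, the paper itself does not reprove this lemma either --- it cites it from \cite[Lemma 6]{Watanabe:15} --- and your argument for the corollary closely mirrors the paper's own proof of Corollary \ref{corollary:WAK-bound}; but that is not the assigned statement.)

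What is actually missing is the substance of the GW strong converse: one must work with the set of correctly decoded pairs inside the type class, whose probability under $P_{{\cal T}^n_{\bar{X}\bar{Y}}}$ is at least $2^{-n\alpha_n}$, define the auxiliary variable from the common message $\varphi_0(X^n,Y^n)$ (together with a time-sharing index) under the distribution conditioned on correct decoding, and then (i) lower bound $\frac{1}{n}\log|{\cal M}_0|$ by the single-letterized mutual information, paying $\alpha_n+\beta_n$ for the change of measure and $\frac{|{\cal X}||{\cal Y}|\log(n+1)}{n}$ for replacing the uniform-on-type-class distribution by a product distribution; (ii) lower bound $\frac{1}{n}\log|{\cal M}_1|$ and $\frac{1}{n}\log|{\cal M}_2|$ by $H(\bar{X}|\bar{W})$ and $H(\bar{Y}|\bar{W})$ using the fact that $X^n$ (resp.\ $Y^n$) is recoverable from $({\cal M}_0,{\cal M}_1)$ (resp.\ $({\cal M}_0,{\cal M}_2)$) on the correctly decoded set, with the $2^{-n\beta_n}\log|{\cal X}|$ and $2^{-n\beta_n}\log|{\cal Y}|$ terms absorbing the low-probability residual; and (iii) reduce the alphabet of $\bar{W}$ to $|{\cal X}||{\cal Y}|+2$ via the support lemma. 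None of these steps appears in your proposal, so there is a genuine gap: you proved a downstream consequence, not the lemma.
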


To prove Corollary \ref{corollary:WAK-bound}, we first decompose the error probability by type ${\cal P}_n({\cal X}\times {\cal Y})$ as 
\begin{align*}
\rom{P}_{\mathtt{WAK}}(\tilde{\Phi}_n | P_{XY}^n) 
&= \sum_{P_{\bar{X}\bar{Y}} \in {\cal P}_n({\cal X}\times{\cal Y})} P_{XY}^n({\cal T}_{\bar{X}\bar{Y}}^n) \rom{P}_{\mathtt{WAK}}(\tilde{\Phi}_n | P_{{\cal T}_{\bar{X}\bar{Y}}^n}) \\
&\ge \sum_{P_{\bar{X}\bar{Y}} \in {\cal P}_n({\cal X}\times{\cal Y}): \atop P_{\bar{X}} \in {\cal E}_n} P_{XY}^n({\cal T}_{\bar{X}\bar{Y}}^n) \rom{P}_{\mathtt{WAK}}(\tilde{\Phi}_n | P_{{\cal T}_{\bar{X}\bar{Y}}^n}).
\end{align*}
For each joint type $P_{\bar{X}\bar{Y}}$ satisfying $P_{\bar{X}} \in {\cal E}_n$, 
there exists (possibly different codes for different joint types)
a GW code $\Phi_n = (\varphi_0,\varphi_1,\varphi_2,\psi_1,\psi_2)$ satisfying \eqref{eq:connection-1}-\eqref{eq:connection-4} 
of Theorem \ref{theorem:operational-connection}. By Lemma \ref{lemma:converse-GW} with $\alpha_n=\beta_n = \frac{\log n}{n}$, if 
\begin{align*}
r_{0,n} &:= \frac{1}{n} \log |{\cal M}_0^{(n)}| + \frac{|{\cal X}||{\cal Y}| \log (n+1)}{n} + (\alpha_n + \beta_n), \\
r_{1,n} &:= \frac{1}{n} \log |{\cal M}_1^{(n)}| + \frac{1}{n} + 2^{- n \beta_n} \log |{\cal X}|, \\
r_{2,n} &:= \frac{1}{n} \log |{\cal M}_2^{(n)}| + \frac{1}{n} + 2^{- n \beta_n} \log |{\cal Y}|
\end{align*} 
are such that $(r_{0,n},r_{1,n},r_{2,n}) \notin {\cal R}^*_{\mathtt{GW}}(P_{\bar{X}\bar{Y}})$, then 
\begin{align} \label{eq:GW-lower-bound-for-outside}
\rom{P}_{\mathtt{GW}}( \Phi_n | P_{{\cal T}_{\bar{X}\bar{Y}}^n}) > 1 - 2^{-n\alpha_n}.
\end{align}

We claim that $(r_{0,n},r_{1,n},r_{2,n}) \in {\cal R}^*_{\mathtt{GW}}(P_{\bar{X}\bar{Y}})$ implies $(\tilde{r}_{0,n}, \tilde{r}_{2,n}) \in {\cal R}^*_{\mathtt{WAK}}(\delta_n | P_{\bar{X}\bar{Y}})$.
In fact, when $(r_{0,n},r_{1,n},r_{2,n}) \in {\cal R}^*_{\mathtt{GW}}(P_{\bar{X}\bar{Y}})$, then there exists $P_{\bar{W}|\bar{X}\bar{Y}}$ such that 
\begin{align}
r_{0,n} &\ge I(\bar{W} \wedge \bar{X},\bar{Y}), \label{eq:bound-r-0} \\
r_{1,n} &\ge H(\bar{X}|\bar{W}), \label{eq:bound-r-1} \\
r_{2,n} &\ge H(\bar{Y}|\bar{W}). \label{eq:bound-r-2}
\end{align}
From \eqref{eq:connection-1} and \eqref{eq:bound-r-0}, we have 
\begin{align*}
\tilde{r}_{0,n} \ge I(\bar{W} \wedge \bar{X}, \bar{Y}).
\end{align*}
From \eqref{eq:connection-3} and \eqref{eq:bound-r-2}, we have 
\begin{align*}
\tilde{r}_{2,n} \ge H(\bar{Y}|\bar{W}).
\end{align*}
From \eqref{eq:connection-2}, \eqref{eq:bound-r-0}, and \eqref{eq:bound-r-1}, we have
\begin{align*}
\lefteqn{ H(\bar{X}) + I(\bar{W} \wedge \bar{Y}|\bar{X}) } \\
&= I(\bar{W} \wedge \bar{X},\bar{Y}) + H(\bar{X}|\bar{W}) \\
&\le r_{0,n} + r_{1,n} \\
&\le \frac{1}{n} \log |{\cal T}_{\bar{X}}^n| \\
&~~~+ \frac{3\log n + \log \log |{\cal X}| + \log |{\cal X}| + 3 + |{\cal X}||{\cal Y}| \log (n+1) }{n}  \\
&\le H(\bar{X}) + \delta_n.
\end{align*}
Thus, we have $(\tilde{r}_{0,n}, \tilde{r}_{2,n}) \in {\cal R}^*_{\mathtt{WAK}}(\delta_n | P_{\bar{X}\bar{Y}})$.
By taking the contraposition and by \eqref{eq:GW-lower-bound-for-outside}, 
if $(\tilde{r}_{0,n}, \tilde{r}_{2,n}) \notin {\cal R}^*_{\mathtt{WAK}}(\delta_n | P_{\bar{X}\bar{Y}})$, then we have
\begin{align*}
\rom{P}_{\mathtt{WAK}}(\tilde{\Phi}_n | P_{{\cal T}_{\bar{X}\bar{Y}}^n}) \ge \rom{P}_{\mathtt{GW}}(\Phi_n | P_{{\cal T}_{\bar{X}\bar{Y}}^n}) > 1 - 2^{- n \alpha_n}.
\end{align*}
Thus, we have
\begin{align*}
\lefteqn{ \rom{P}_{\mathtt{WAK}}(\tilde{\Phi}_n | P_{XY}^n) } \\
&\ge \sum_{P_{\bar{X}\bar{Y}} \in {\cal P}_n({\cal X}\times{\cal Y}): \atop P_{\bar{X}} \in {\cal E}_n} P_{XY}^n({\cal T}_{\bar{X}\bar{Y}}^n) \rom{P}_{\mathtt{WAK}}(\tilde{\Phi}_n | P_{{\cal T}_{\bar{X}\bar{Y}}^n}) \\
&\ge \sum_{P_{\bar{X}\bar{Y}} \in {\cal P}_n({\cal X}\times{\cal Y}) : \atop (\tilde{r}_{0,n},\tilde{r}_{2,n}) \notin {\cal R}^*_{\mathtt{WAK}}(\delta_n | P_{\bar{X}\bar{Y}}) , P_{\bar{X}} \in {\cal E}_n }
 P_{XY}^n({\cal T}_{\bar{X}\bar{Y}}^n) (1 - 2^{-n \alpha_n}) \\
 &= \rom{P}\bigg( (\tilde{r}_{0,n}, \tilde{r}_{2,n}) \notin {\cal R}^*_{\mathtt{WAK}}(\delta_n | \san{t}_{X^n Y^n}),~\san{t}_{X^n} \in {\cal E}_n \bigg) \bigg(1- \frac{1}{n} \bigg).
\end{align*}
\qed

\subsection{Proof of Corollary \ref{corollary:strong-converse}}

To discuss the continuity of region ${\cal R}^*_{\mathtt{WAK}}(\delta | P_{XY})$ at $\delta=0$, let us consider 
the following supporting line of the region:
\begin{align*}
R_\mu(\delta | P_{XY}) := \min \{ r_0 + \mu r_2 : (r_0,r_2) \in {\cal R}_{\mathtt{WAK}}^*(\delta | P_{XY}) \}
\end{align*}
for $\mu \ge 0$. For brevity, we write $R_\mu(P_{XY}) = R_\mu(0|P_{XY})$.
\begin{lemma} \label{lemma:continuity-1}
For a given $P_{XY}$ and $\mu \ge 0$, we have
\begin{align*}
\lim_{\delta \to 0} R_\mu(\delta | P_{XY}) = R_\mu(P_{XY}).
\end{align*}
\end{lemma}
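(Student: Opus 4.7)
The plan is to establish the two inequalities $\limsup_{\delta\to 0^+} R_\mu(\delta|P_{XY}) \le R_\mu(P_{XY})$ and $\liminf_{\delta\to 0^+} R_\mu(\delta|P_{XY}) \ge R_\mu(P_{XY})$ separately, with the bulk of the work in the second direction.

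For the upper bound, I would exploit monotonicity: since the feasibility constraints defining ${\cal R}^*_{\mathtt{WAK}}(\delta|P_{XY})$ only loosen as $\delta$ grows, we have ${\cal R}^*_{\mathtt{WAK}}(P_{XY}) \subseteq {\cal R}^*_{\mathtt{WAK}}(\delta|P_{XY})$, and hence $R_\mu(\delta|P_{XY}) \le R_\mu(P_{XY})$ for every $\delta \ge 0$. This immediately yields $\limsup_{\delta \to 0} R_\mu(\delta|P_{XY}) \le R_\mu(P_{XY})$.

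For the matching lower bound, I would argue by compactness. Pick any sequence $\delta_n \downarrow 0$ and for each $n$ let $P_{W_n|XY}$ be a test channel attaining the minimum that defines $R_\mu(\delta_n|P_{XY})$; the fixed cardinality bound $|{\cal W}| \le |{\cal X}||{\cal Y}|+2$, adopted throughout the paper precisely for this kind of continuity argument, lets me regard each $P_{W_n|XY}$ (after padding with null symbols) as a stochastic matrix in a single compact Euclidean set. By Bolzano-Weierstrass I would pass to a convergent subsequence $P_{W_n|XY} \to P_{W^*|XY}$. Continuity of the Shannon functionals on the probability simplex then gives
\begin{align*}
I(W^*\wedge Y|X) \le \lim_{n\to\infty} \delta_n = 0,
\end{align*}
which forces the Markov chain $W^* \markov X \markov Y$, so $(I(W^*\wedge X,Y), H(Y|W^*)) \in {\cal R}^*_{\mathtt{WAK}}(P_{XY})$. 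Passing to the limit in the objective yields $\lim_n [I(W_n\wedge X,Y) + \mu H(Y|W_n)] = I(W^*\wedge X,Y) + \mu H(Y|W^*) \ge R_\mu(P_{XY})$, closing the gap.

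The main potential obstacle is the compactness step, but it is made routine by the deliberately loose cardinality bound on ${\cal W}$ that was fixed at the outset: without a uniform alphabet bound, extraction of a limiting test channel would require an additional uniformization. A minor subtlety I would check first is that the minimum in the definition of $R_\mu(\delta|P_{XY})$ is actually attained, so the minimizers $P_{W_n|XY}$ exist; this again follows from the same compactness plus joint continuity of $I(W\wedge X,Y)$, $H(Y|W)$, and $I(W\wedge Y|X)$ in $P_{W|XY}$.
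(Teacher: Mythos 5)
Your proof is correct, but it takes a genuinely different route from the paper's. The paper argues directly and quantitatively: given an optimal test channel $P_{W|XY}$ for $R_\mu(\delta|P_{XY})$, it constructs the ``Markovized'' distribution $P_{\tilde{W}\tilde{X}\tilde{Y}} = P_{W|X}P_{XY}$ (which satisfies $\tilde{W}\markov\tilde{X}\markov\tilde{Y}$ by fiat), notes that $D(P_{WXY}\|P_{\tilde{W}\tilde{X}\tilde{Y}}) = I(W\wedge Y|X)\le\delta$, applies Pinsker to get $\ell_1$ closeness of order $\sqrt{\delta}$, and then uses uniform continuity of entropy functionals to conclude $R_\mu(P_{XY}) \le R_\mu(\delta|P_{XY}) + \delta'$ with an explicit $\delta'\to 0$. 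Your argument instead is non-constructive: you invoke Bolzano--Weierstrass on the (padded, fixed-cardinality) set of test channels along a sequence $\delta_n\downarrow 0$, extract a convergent subsequence $P_{W_n|XY}\to P_{W^*|XY}$, and use continuity to push the constraint $I(W_n\wedge Y|X)\le\delta_n$ to $I(W^*\wedge Y|X)=0$, which yields the Markov chain and hence $R_\mu(P_{XY})\le I(W^*\wedge X,Y)+\mu H(Y|W^*)$. Both proofs correctly use monotonicity for the easy direction and both implicitly use compactness plus the cardinality bound to guarantee that the minimum defining $R_\mu(\delta|\cdot)$ is attained. The trade-off: the paper's perturbation argument is shorter and yields an explicit modulus of continuity (potentially useful for second-order refinements, as the discussion section hints), whereas your compactness argument is cleaner conceptually and more routine, but gives no rate. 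One small point worth stating explicitly in your write-up is that $R_\mu(\delta|P_{XY})$ is monotone non-increasing in $\delta$, so the limit $\lim_{\delta\to 0^+}R_\mu(\delta|P_{XY})$ exists and equals the limit along any particular sequence $\delta_n\downarrow 0$; this is what lets the subsequence limit transfer to the full claim.
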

\begin{proof}
By definition, $R_\mu(\delta |P_{XY}) \le R_\mu(P_{XY})$ for any $\delta > 0$. 
Let $P_{W|XY}$ be a test channel such that 
\begin{align*}
I(W \wedge X,Y) + \mu H(Y|W) = R_\mu(\delta |P_{XY})
\end{align*}
and $I(W \wedge Y|X) \le \delta$.
Let $P_{\tilde{W}\tilde{X}\tilde{Y}} = P_{W|X} P_{XY}$. Note that $P_{\tilde{X}\tilde{Y}} = P_{XY}$ and 
$\tilde{W} \markov \tilde{X} \markov \tilde{Y}$. By noting that 
$D(P_{WXY} \| P_{\tilde{W}\tilde{X}\tilde{Y}}) = I(W \wedge Y|X) \le \delta$ and by the
Pinsker inequality, we have
$\| P_{WXY} - P_{\tilde{W}\tilde{X}\tilde{Y}} \|_1 \le \sqrt{\delta/2}$.
Thus, by the continuity of the entropy, there exists $\delta^\prime$ such that 
$\delta^\prime \to 0$ as $\delta \to 0$ and 
\begin{align*}
R_\mu(P_{XY}) &\le I(\tilde{W} \wedge \tilde{X}) + \mu H(\tilde{Y}|\tilde{W}) \\
&= I(\tilde{W} \wedge \tilde{X},\tilde{Y}) + \mu H(\tilde{Y}|\tilde{W}) \\
&\le I(W \wedge X,Y) + \mu H(Y|W) + \delta^\prime \\
&= R_\mu(\delta | P_{XY}) + \delta^\prime,
\end{align*}
which implies the claim of the lemma.
\end{proof}

We also have the following continuity.
\begin{lemma} \label{lemma:continuity-2}
$R_\mu(P_{XY})$ is continuous with respect to $P_{XY}$. 
\end{lemma}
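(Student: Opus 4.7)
The plan is to prove continuity by establishing upper and lower semicontinuity separately, exploiting the fact that the auxiliary alphabet has bounded cardinality. Since $\mathcal{R}_{\mathtt{WAK}}^*(P_{XY})$ is characterized by test channels $P_{W|X}$ with $|\mathcal{W}| \le |\mathcal{X}||\mathcal{Y}| + 2$, I would rewrite
\begin{align*}
R_\mu(P_{XY}) = \min_{P_{W|X} \in \mathcal{S}} \bigl[ I(W \wedge X) + \mu H(Y|W) \bigr],
\end{align*}
where $\mathcal{S}$ is the set of stochastic kernels from $\mathcal{X}$ to $\mathcal{W}$ and both functionals are computed under $P_{WXY} = P_{W|X} P_{XY}$ (so the Markov chain $W \markov X \markov Y$ is built in). The constraint set $\mathcal{S}$ is compact and does not depend on $P_{XY}$, and the objective $(P_{W|X}, P_{XY}) \mapsto I(W \wedge X) + \mu H(Y|W)$ is jointly continuous because $H$ and $I$ are continuous on the simplex of joint distributions over finite alphabets.

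For upper semicontinuity, given $P^{(k)}_{XY} \to P_{XY}$, I would take a minimizer $P^*_{W|X}$ for $P_{XY}$ and evaluate the same kernel against each $P^{(k)}_{XY}$; the joint distributions converge, and continuity of the objective yields
\begin{align*}
\limsup_{k\to\infty} R_\mu(P^{(k)}_{XY}) \le R_\mu(P_{XY}).
\end{align*}
For lower semicontinuity, I would pick minimizers $P^{(k)*}_{W|X} \in \mathcal{S}$ for $P^{(k)}_{XY}$ along a subsequence realizing $\liminf_k R_\mu(P^{(k)}_{XY})$, then use compactness of $\mathcal{S}$ to extract a further subsequence with $P^{(k_j)*}_{W|X} \to Q_{W|X} \in \mathcal{S}$. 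Since $Q_{W|X}$ is admissible for $P_{XY}$, joint continuity gives
\begin{align*}
R_\mu(P_{XY}) \le I_Q(W \wedge X) + \mu H_Q(Y|W) = \lim_{j\to\infty} R_\mu(P^{(k_j)}_{XY}) = \liminf_{k\to\infty} R_\mu(P^{(k)}_{XY}).
\end{align*}
Combining the two bounds yields continuity of $R_\mu(\cdot)$ at $P_{XY}$.

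The main subtlety, rather than an obstacle, concerns kernels at symbols $x$ with $P_X(x) = 0$: the values $P^*_{W|X}(\cdot|x)$ are irrelevant for the objective under $P_{XY}$ but affect the joint under nearby $P^{(k)}_{XY}$. This is resolved by observing that any extension lies in $\mathcal{S}$ and the contribution of such $x$ to the joint distribution vanishes as $P^{(k)}_X(x) \to 0$, so the continuity argument is unaffected. The whole proof is essentially an instance of Berge's maximum theorem, so the calculation should be short.
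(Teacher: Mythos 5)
Your proof is correct. Both you and the paper exploit the same key fact: since the cardinality of $\mathcal{W}$ is bounded, one can take a test channel $P_{W|X}$ that is (near) optimal for one distribution, compose it with a nearby marginal to form a joint, and use continuity of entropy and mutual information on a fixed finite alphabet. The paper establishes continuity by a symmetric two-sided argument: take the minimizing $P_{W|X}$ for $P_{XY}$, port it to $P_{\tilde X\tilde Y}$, bound $\|P_{\tilde W\tilde X\tilde Y}-P_{WXY}\|_1 \le \|P_{\tilde X\tilde Y}-P_{XY}\|_1$, and invoke uniform continuity of the entropy functionals to get $R_\mu(P_{\tilde X\tilde Y}) \le R_\mu(P_{XY})+\epsilon'$; then swap roles. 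You obtain upper semicontinuity by the same ``port the minimizer'' step, but you handle the other inequality by compactness of the kernel set $\mathcal S$ and extraction of a convergent subsequence of minimizers, \emph{i.e.} a Berge-maximum-theorem style argument. That route is slightly heavier machinery for the same conclusion (the paper's symmetric bound is shorter and even gives an explicit modulus of continuity), but it has the advantage of not needing to re-examine the modulus in the reverse direction, and your explicit remark about symbols with $P_X(x)=0$ correctly addresses a detail the paper passes over silently (it is harmless there too, since only the forward map $(P_{W|X},P_{XY})\mapsto P_{WXY}$ is used, which is continuous regardless of support).
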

\begin{proof}
Let $P_{XY}$ and $P_{\tilde{X}\tilde{Y}}$ be such that $\| P_{XY} - P_{\tilde{X}\tilde{Y}} \|_1 \le \epsilon$. 
Let $P_{W|X}$ be a test channel such that 
\begin{align*}
I(W \wedge X,Y) + \mu H(Y|W) = R_\mu(P_{XY}).
\end{align*}
Let $P_{\tilde{W}\tilde{X}\tilde{Y}} = P_{W|X} P_{\tilde{X}\tilde{Y}}$. Then, 
\begin{align*}
\| P_{\tilde{W}\tilde{X}\tilde{Y}} - P_{WXY} \|_1 = \| P_{\tilde{X}\tilde{Y}} - P_{XY} \|_1 \le \epsilon.
\end{align*}
Thus, by the continuity of the entropy, there exists $\epsilon^\prime$ such that $\epsilon^\prime \to 0$ as
$\epsilon \to 0$ and 
\begin{align*}
R_\mu(P_{\tilde{X}\tilde{Y}})
&\le I(\tilde{W} \wedge \tilde{X}) + \mu H(\tilde{Y} | \tilde{W}) \\
&\le R_\mu(P_{XY}) + \epsilon^\prime. 
\end{align*}
Similarly, we can show $R_\mu(P_{XY}) \le R_\mu(P_{\tilde{X}\tilde{Y}}) + \epsilon^\prime$.
\end{proof}

Now, we prove Corollary \ref{corollary:strong-converse} by using Corollary \ref{corollary:WAK-bound}.
In the following, we use the same notations $(\tilde{r}_{0,n},\tilde{r}_{2,n},\delta_n)$ as Corollary \ref{corollary:WAK-bound}.

Let ${\cal K}_n \subseteq {\cal P}_n({\cal X}\times{\cal Y})$ be the set of all joint types $P_{\bar{X}\bar{Y}}$ such that
\begin{align*}
| P_{\bar{X}\bar{Y}}(x,y) - P_{XY}(x,y) | \le \sqrt{\frac{\log n}{n}}
\end{align*}
for every $(x,y) \in {\cal X}\times {\cal Y}$. By the Hoeffding inequality, we have
\begin{align*}
\rom{P}\bigg( \san{t}_{X^n Y^n} \in {\cal K}_n \bigg) \ge 1 - \frac{2 |{\cal X}||{\cal Y}|}{n^2}.
\end{align*}
Since $r_0 < H(X)$, \eqref{eq:rate-condition-0} implies that there exists $\nu > 0$ such that 
\begin{align*}
\frac{1}{n} \log |\tilde{{\cal M}}_0^{(n)}| \le H(X) - \nu
\end{align*}
for sufficiently large $n$. Thus, by the continuity of the entropy, 
$\san{t}_{X^n Y^n} \in {\cal K}_n$ implies $\san{t}_{X^n} \in {\cal E}_n$.

Since $(r_0,r_2) \notin {\cal R}^*_{\mathtt{WAK}}(P_{XY})$, there exists $\mu \ge 0$ and $\nu > 0$ such that 
$r_0 + \mu r_2 \le R_\mu(P_{XY}) - (3+\mu)\nu$. 
Also, \eqref{eq:rate-condition-0} and \eqref{eq:rate-condition-2} imply 
\begin{align*}
\frac{1}{n} \log |\tilde{{\cal M}}_i^{(n)}| \le r_i + \nu,~~~i=0,2 
\end{align*}
for sufficiently large $n$. By Lemma \ref{lemma:continuity-1} and Lemma \ref{lemma:continuity-2}, 
$\san{t}_{X^nY^n} \in {\cal K}_n$ imply  
$R_\mu(P_{XY}) \le R_\mu(\delta_n | \san{t}_{X^nY^n}) + \nu$
for sufficiently large $n$, which implies 
\begin{align*}
\frac{1}{n} \log |\tilde{{\cal M}}_0^{(n)}| + \frac{\mu}{n} \log |\tilde{{\cal M}}_2^{(n)}|
&\le r_0 + \mu r_2 + (1+\mu)\nu \\
&\le R_\mu(P_{XY}) - 2\nu \\
&\le R_\mu(\delta_n | \san{t}_{X^n Y^n}) - \nu.
\end{align*}
Thus, $\san{t}_{X^n Y^n} \in {\cal K}_n$ implies 
$\tilde{r}_{0,n} + \mu \tilde{r}_{2,n} < R_\mu(\delta_n | \san{t}_{X^n Y^n})$, i.e.,
$(r_{0,n}, r_{2,n}) \notin {\cal R}^*_{\mathtt{WAK}}(\delta_n | \san{t}_{X^nY^n})$ for sufficiently large $n$. 

Consequently, by Corollary \ref{corollary:WAK-bound}, we have
\begin{align*}
\lefteqn{\rom{P}_{\mathtt{WAK}}(\tilde{\Phi}_n | P_{XY}^n) } \\
&\ge \rom{P}\bigg( (\tilde{r}_{0,n}, \tilde{r}_{2,n}) \notin {\cal R}^*_{\mathtt{WAK}}(\delta_n | \san{t}_{X^n Y^n}),~\san{t}_{X^n} \in {\cal E}_n \bigg) \bigg(1- \frac{1}{n} \bigg) \\
&\ge \rom{P}\bigg( \san{t}_{X^n Y^n} \in {\cal K}_n \bigg) \bigg(1- \frac{1}{n} \bigg) \\
&\ge \bigg( 1 - \frac{2 |{\cal X}||{\cal Y}|}{n^2} \bigg) \bigg(1- \frac{1}{n} \bigg),
\end{align*}
which implies \eqref{eq:convergence-to-one}. \qed

\section{Discussions} \label{section:discussion}

In this paper, in order to derive a converse bound on the WAK network from 
a converse bound on the GW network, we showed a reduction method to construct a GW code
from a given WAK code. Since the WAK network is distributed coding and the GW network is
centralized coding, an opposite reduction, i.e., constructing a WAK code from a given GW code,
is not possible in general. 

Since the residual terms in Corollary \ref{corollary:WAK-bound} are ${\cal O}((\log n)/n)$,
it may give an outer bound for the second-order region of the WAK network (cf.~\cite{watanabe:13e}).
However, $\delta=0$ could be singular points of the region ${\cal R}^*_{\mathtt{WAK}}(\delta |P_{XY})$
though this region is continuous at $\delta=0$. Thus, some careful treatment is needed to 
investigate a second-order outer bound, which is an interesting future research problem. 

Recently, a method to derive converse bounds for multiuser 
problems via reverse hypercontractivity was proposed by Liu-Handel-Verd\'u \cite{LiuHanVer:17};
they derived a second-order outer bound on the WAK network as an application of their approach.



\section*{Acknowlegement}

The author would like to thank Chandra Nair for letting the author know
the connection between the GW network and the WAK network. 
This work is supported
in part by JSPS KAKENHI Grant Number 16H06091.

\bibliographystyle{../../../09-04-17-bibtex/IEEEtranS}
\bibliography{../../../09-04-17-bibtex/reference.bib}
\end{document}